\documentclass{article}

\usepackage{arxiv}

\usepackage[utf8]{inputenc} % allow utf-8 input
\usepackage[T1]{fontenc}    % use 8-bit T1 fonts
\usepackage{hyperref}       % hyperlinks
\usepackage{url}            % simple URL typesetting
\usepackage{booktabs}       % professional-quality tables
\usepackage{amsfonts}       % blackboard math symbols
\usepackage{nicefrac}       % compact symbols for 1/2, etc.
\usepackage{microtype}      % microtypography
\usepackage{lipsum}		% Can be removed after putting your text content
\usepackage{graphicx}
\usepackage{natbib}
\usepackage{doi}
\usepackage{amsmath} 
\usepackage{amsthm}
\usepackage{amssymb}
\usepackage{amsfonts}
\usepackage{enumitem}
\usepackage{float}
\usepackage{tabularx}

\newtheorem{theorem}{Theorem}

\newtheorem{definition}{Definition}

\newcolumntype{C}{>{\centering\arraybackslash}X}

\title{Orders Between Channels and Implications for Partial Information Decomposition}

%\date{September 9, 1985}	% Here you can change the date presented in the paper title
\date{} 					% Or removing it

\author{{\hspace{1mm}André F. C. Gomes}\thanks{Corresponding author. } \\
	Instituto de Telecomunicações\\
	Instituto Superior Técnico\\
	Universidade de Lisboa, Portugal \\
	\texttt{andrefcgomes@tecnico.ulisboa.pt} \\
	%% examples of more authors
	\And
	{\hspace{1mm}Mário A. T. Figueiredo} \\
	Instituto de Telecomunicações\\
	Instituto Superior Técnico\\
	Universidade de Lisboa, Portugal \\
	\texttt{mario.figueiredo@tecnico.ulisboa.pt} \\
	%% \AND
	%% Coauthor \\
	%% Affiliation \\
	%% Address \\
	%% \texttt{email} \\
	%% \And
	%% Coauthor \\
	%% Affiliation \\
	%% Address \\
	%% \texttt{email} \\
	%% \And
	%% Coauthor \\
	%% Affiliation \\
	%% Address \\
	%% \texttt{email} \\
}

% Uncomment to remove the date
%\date{}

%Uncomment to override  the `A preprint' in the header
%\renewcommand{\headeright}{Technical Report}
%\renewcommand{\undertitle}{Technical Report}

%%% Add PDF metadata to help others organize their library
%%% Once the PDF is generated, you can check the metadata with
%%% $ pdfinfo template.pdf
\hypersetup{
pdftitle={Orders between channels and implications for partial information decomposition},
pdfsubject={},
pdfauthor={André F.C. Gomes, Mário A.T. Figueiredo},
pdfkeywords={information theory, partial information decomposition, channel preorders, intersection information, redundancy},
}

\begin{document}
\maketitle

\begin{abstract}
The \textit{partial information decomposition} (PID) framework is concerned with decomposing the information that a set of random variables has with respect to a target variable into three types of components: redundant, synergistic, and unique. Classical information theory alone does not provide a unique way to decompose information in this manner and additional assumptions have to be made. Recently, Kolchinsky proposed a new general axiomatic approach to obtain measures of redundant information, based on choosing a preorder relation between information sources (equivalently, a preorder between communication channels). In this paper, we exploit this approach to introduce three new measures of redundant information (and the resulting decompositions) based on well-known preorders between channels, thus contributing to the enrichment of the PID landscape. We relate the new decompositions to existing ones, study some of their properties, and provide examples illustrating their novelty. As a side result, we prove that any preorder that satisfies Kolchinsky's axioms yields a decomposition that meets the axioms originally introduced by Williams and Beer when they first propose the PID.
\end{abstract}

% keywords can be removed
\keywords{information theory \and partial information decomposition \and channel preorders \and intersection information \and shared information \and redundancy}

\section{Introduction}
\citet{williams2010nonnegative} proposed the \textit{partial information decomposition} (PID) framework as a way to characterize, or analyze, the information that a set of random variables (often called \textit{sources}) has about another variable (referred to as the \textit{target}). PID is a useful tool for gathering insights and analyzing the way information is stored, modified, and transmitted within complex systems \citep{lizier2013towards}, \citep{wibral2017quantifying}. It has found applications in areas such as cryptography \citep{rauh2017secret} and neuroscience \citep{vicente2011transfer,ince2015tracing}, with many other potential use cases, such as in understanding how information flows in gene regulatory networks \citep{gates2016control}, neural coding \citep{faber2019computation}, financial markets \citep{james2018modes}, and network design \citep{arellano2013shannon}.

Consider the simplest case: a three-variable joint distribution $p(y_1, y_2,t)$ describing three random variables: two sources, $Y_1$ and $Y_2$, and a target $T$. Notice that, despite what the names \textit{sources} and \textit{target} might suggest, there is no directionality (causal or otherwise) assumption. The goal of PID is to \textit{decompose} the information that $Y=(Y_1, Y_2)$ has about $T$ into the sum of 4 non-negative quantities: the information that is present in both $Y_1$ and $Y_2$, known as \textit{redundant} information $R$; the information that only $Y_1$ (respectively $Y_2$) has about $T$, known as \textit{unique} information $U_1$ (respectively $U_2$); the \textit{synergistic} information $S$ that is present in the pair $(Y_1, Y_2)$ but not in $Y_1$ or $Y_2$ alone. That is, in this case with two variables, the goal is to write
\begin{equation} \label{decomposition}
I(T;Y) = R + U_1 + U_2 + S,
\end{equation}
where $I(T;Y)$ is the mutual information between $T$ and $Y$ \citep{cover1999elements}. Because unique information and redundancy satisfy the relationship  $U_i = I(T; Y_i) - R$ (for $i 
\in\{1,2\}$), it turns out that defining how to compute one of these quantities ($R$, $U_i$, or $S$) is enough to fully determine the others \citep{williams2010nonnegative}. As the number of variables grows, the number of terms appearing in the PID of $I(T;Y)$ grows super exponentially \citep{gutknecht2021bits}. \citet{williams2010nonnegative} suggested a set of axioms that a measure of redundancy should satisfy, and proposed a measure of their own. Those axioms became known as the Williams-Beer axioms and the measure they proposed has subsequently been criticized for not capturing informational content, but only information size \citep{harder2013bivariate}.

%It was defined as \cite{williams2010nonnegative}:
%$$I^{\text{WB}}_\cap(Y_1, ..., Y_n \rightarrow T) = \sum_{t}p(t) \min_{Y_i} I(T=t; Y_i),$$
%where $I(T=t; Y_i)$ refers to the specific information \cite{williams2010nonnegative}, \cite{deweese1999measure}. According to this PID, we have $R = I^{\text{WB}}_\cap(Y_1, Y_2)$ for $n=2$ case. We will omit variable $T$ from the arguments of any measure $I_\cap$ when it is clear from context. In this formula, the arguments need not be the variables $Y_i$ from the original distribution. For instance, a variable $Y_i$ in the arguments could refer to a tuple of variables $(Y_{i_1}, ..., Y_{i_k})$ of the original distribution.

Spawned by that initial work, other measures and axioms for information decomposition have been introduced; see, for example, the work by \citet{bertschinger2014quantifying}, \citet{griffith2014quantifying}, and \citet{james2018unique}. There is no consensus about what axioms any measure should satisfy or whether a given measure is \textit{capturing the information} that it should capture, except for the Williams-Beer axioms. Today, there is still debate about what axioms a measure of redundant information should satisfy and there is no general agreement on what is an appropriate PID \citep{chicharro2017synergy, james2018unique, bertschinger2013shared, rauh2017coarse, ince2017measuring}. 

Recently, \citet{kolchinsky2022novel} suggested a new general approach to define measures of redundant information, also known as \textit{intersection information} (II), the designation that we adopt hereinafter. At the core of that approach is the choice of a preorder relation between information sources (random variables), which allows comparing two sources in terms of how informative they are with respect to the target variable.  

In this work, we take previously studied preorders between communication channels, which correspond to preorders between the corresponding output variables in terms of information content with respect to the input. Following Kolchinsky's approach, we show that those preorders thus lead to the definition of new II measures. The rest of the paper is organized as follows. In Sections 2 and 3, we review Kolchinsky's definition of an II measure and the \textit{degradation} preorder. In Section 4, we describe some preorders between channels, based on the work by \citet{krner1977comparison} and \citet{americo2021channel}, derive the resulting II measures, and study some of their properties. Section 5 presents and comments on the optimization problems involved in the computation of the proposed measures. In Section 6, we explore the relationships between the new II measures and previous PID approaches, and we apply the proposed II measures to some famous PID problems. Section 7 concludes the paper by pointing out some suggestions for future work.

\section{Kolchinsky's Axioms and Intersection Information}
Consider a set of $n$ discrete random variables, $Y_1 \in \mathcal{Y}_1, ..., Y_n\in\mathcal{Y}_n$, called the source variables, and let $T \in \mathcal{T}$ be the (also discrete) target variable, with joint distribution (probability mass function) $p(y_1, ..., y_n, t)$.  Let $\preceq$ denote some preorder between random variables that satisfies the following axioms, herein referred to as \emph{Kolchinsky's axioms} \citep{kolchinsky2022novel}:
\begin{description}
        \item[(i)] Monotonicity of \textit{mutual information}\footnote{In this paper, mutual information is always assumed as referring to Shannon's mutual information, which, for two discrete variables $X \in \mathcal{X}$ and $Z\in \mathcal{Z}$, is given by 
    \[
    I(X;Z) = \sum_{x\in \mathcal{X}} \sum_{z\in \mathcal{Z}} p(x,z) \log\frac{p(x,z)}{p(x)\, p(z)}, 
    \]
    and satisfies the following well-known fundamental properties: $I(X;Z) \geq 0$ and $I(X;Z)=0 \,\Leftrightarrow \, X \perp Z$ ($X$ and $Z$ are independent) \citep{cover1999elements}.} w.r.t. $T$: $Y_i \preceq Y_j \, \Rightarrow \, I(Y_i;T) \leq I(Y_j;T)$.
    \item[(ii)] Reflexivity: $Y_i \preceq Y_i$ for all $Y_i$.
    \item[(iii)] For any $Y_i$, $C \preceq Y_i \preceq (Y_1, ..., Y_n)$, where $C\in \mathcal{C}$ is 
    any variable taking a constant value with probability one, \textit{i.e.}, with a distribution that is a delta function or such that $\mathcal{C}$ is a singleton.
\end{description}
\citet{kolchinsky2022novel} showed that such a preorder can be used to define an II measure via
\begin{equation}\label{II}
I_\cap (Y_1, ..., Y_n \rightarrow T) := \sup_{Q:\; Q \preceq Y_i, \; i\in\{1,..,n\} } I(Q;T),
\end{equation}
and we now show that if $\preceq$ is a preorder then the II measure in \eqref{II} satisfies the Williams-Beer axioms \citep{williams2010nonnegative, lizier2013towards}, thus establishing a strong connection between these formulations. Recall that a relation $\preceq$ is a preorder if it satisfies transitivity and reflexivity \citep{schroder2003ordered}. Before stating and proving this result, we recall the Williams-Beer axioms \citep{lizier2013towards}.

\begin{definition}
Let $A_1, ..., A_r$ an arbitrary number of $r \geq 2$ sources \footnote{In the context of \citep{lizier2013towards}, the definition of a source $A_i$ is that of a set of random variables, e.g., $A_1 = \{Y_1, Y_2\}$.}. An intersection information measure $I_\cap$ is said to satisfy the Williams-Beer axioms if it satisfies:
\begin{enumerate}
    \item Symmetry: $I_\cap$ is symmetric in the $A_i$'s.
    \item Self-redundancy: $I_\cap(A_i) = I(A_i;T)$.
    \item Monotonicity: $I_\cap(A_1, ..., A_{r-1}, A_r) \leq I_\cap(A_1, ..., A_{r-1})$.
    \item Equality for Monotonicity: If $A_{r-1} \subseteq A_r$ then $I_\cap(A_1, ..., A_{r-1}, A_r) = I_\cap(A_1, ..., A_{r-1})$
\end{enumerate}
\end{definition}

\begin{theorem}
Let $\preceq$ be some preorder that satisfies Kolchinsky's axioms and define its corresponding II measure as in \eqref{II}. Then the corresponding II measure satisfies the Williams-Beer axioms.
\end{theorem}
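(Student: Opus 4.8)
The plan is to verify each of the four Williams-Beer axioms in turn, using the definition of $I_\cap$ in \eqref{II} together with Kolchinsky's axioms (i)--(iii) and the preorder properties (transitivity and reflexivity). The key observation throughout is that the supremum in \eqref{II} is taken over the set of variables $Q$ that sit below \emph{all} the sources in the preorder; changing or enlarging the list of sources changes this feasible set in a controlled way, and that is what drives each axiom.

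First I would dispatch the two easy axioms. Symmetry is immediate: the feasible set $\{Q : Q \preceq A_i,\ \forall i\}$ depends only on the \emph{set} of constraints $\{A_i\}$, not on their order, so $I_\cap$ is manifestly symmetric. For self-redundancy, I would specialize to a single source $A_i$ and show $\sup_{Q \preceq A_i} I(Q;T) = I(A_i;T)$. The inequality ``$\leq$'' follows from monotonicity of mutual information (axiom (i)): every feasible $Q$ satisfies $Q \preceq A_i$, hence $I(Q;T) \leq I(A_i;T)$. The reverse inequality ``$\geq$'' follows from reflexivity (axiom (ii)): $A_i \preceq A_i$, so $A_i$ is itself feasible and attains the value $I(A_i;T)$. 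Together these give equality, and in fact show the supremum is attained.

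For monotonicity (axiom 3), I would compare the feasible set for the list $(A_1,\dots,A_{r-1},A_r)$ with that for $(A_1,\dots,A_{r-1})$. Adding the constraint $Q \preceq A_r$ can only shrink the feasible set, so the supremum over the smaller set is no larger; this yields $I_\cap(A_1,\dots,A_r) \leq I_\cap(A_1,\dots,A_{r-1})$ directly. The only subtlety worth a remark is that this monotonicity argument is purely set-theoretic and does not even require axiom (i); it is the containment of feasible sets that matters.

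The main obstacle is the fourth axiom (equality for monotonicity), since it is the only one that requires translating the set-inclusion hypothesis $A_{r-1} \subseteq A_r$ into a statement about the preorder $\preceq$. The plan is to argue that under $A_{r-1} \subseteq A_r$ the extra constraint $Q \preceq A_r$ is redundant, so that the two feasible sets coincide and hence the two suprema are equal. By axiom 3 we already have ``$\leq$'', so only ``$\geq$'' must be established, i.e.\ I must show every $Q$ feasible for $(A_1,\dots,A_{r-1})$ is also feasible for the longer list. This hinges on the claim that $A_{r-1} \subseteq A_r$ implies $A_{r-1} \preceq A_r$: if so, then any $Q \preceq A_{r-1}$ satisfies $Q \preceq A_r$ by transitivity, closing the argument. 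I expect this implication to follow from axiom (iii) applied within the subsystem $A_r$ (treating $A_{r-1}$ as a subtuple of $A_r$, constants occupying the omitted coordinates), and I would flag the precise justification of $A_{r-1} \preceq A_r$ as the step that must be handled carefully, since it is where the interpretation of ``source as a set of variables'' and the preorder structure interact.
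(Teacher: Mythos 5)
Your proposal is correct and follows essentially the same route as the paper: symmetry and monotonicity from the structure of the feasible set, self-redundancy from reflexivity plus axiom (i), and equality for monotonicity from axiom (iii) together with transitivity to show the constraint $Q \preceq A_r$ is implied by $Q \preceq A_{r-1}$. The step you flag as delicate --- justifying $A_{r-1} \subseteq A_r \Rightarrow A_{r-1} \preceq A_r$ via axiom (iii) --- is exactly the step the paper invokes (and states with no more detail than you do), so there is no substantive difference.
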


\begin{proof}
Symmetry and monotonicity follow trivially given the form of \eqref{II} (the definition of supremum and restriction set). Self-redundancy follows from the reflexivity of the preorder and monotonicity of mutual information. Now suppose $A_{r-1} \subseteq A_r$, and let $Q$ be a solution of $I_\cap(A_1, ..., A_{r-1})$, which implies that $Q \preceq A_{r-1}$. Now, since $A_{r-1} \subseteq A_r$, the Kolchinsky axiom (iii) and transitivity of the preorder $\preceq$ guarantees that $Q \preceq\ A_{r-1} \preceq A_r$, which means that $Q$ is an admissible point of $I_\cap(A_1, ..., A_{r})$. Therefore $I_\cap(A_1, ..., A_{r-1}, A_r) \geq I_\cap(A_1, ..., A_{r-1})$ and monotonicity guarantees that $I_\cap(A_1, ..., A_{r-1}, A_r) = I_\cap(A_1, ..., A_{r-1})$.
\end{proof}

In conclusion, every preorder that satisfies the set of axioms introduced by \citet{kolchinsky2022novel} yields a valid II measure, in the sense that the measure satisfies the Williams-Beer axioms. Having a \emph{more informative} relation $\preceq$ allows us to draw conclusions about information flowing from different sources. It also allows for the construction of PID measures that are well-defined for more than two sources. In the following, we will omit ``$\rightarrow T$" from the notation (unless we need to explicitly refer to it), with the understanding that the target variable is always some arbitrary, discrete random variable $T$.

\section{Channels and the Degradation/Blackwell Preorder}
%Given two discrete random variables $X \in \mathcal{X}$ and $Z\in \mathcal{Z}$, the corresponding conditional distribution $p(z|x)$ corresponds, in an information-theoretical perspective, to a discrete memoryless channel with a channel matrix $K$, \textit{i.e.}, such that element $K_{x,z} = p(z|x)$ \citep{cover1999elements}. This matrix is row-stochastic: $K_{x,z} \geq 0$, for any  $x \in \mathcal{X}$ and $z\in \mathcal{Z}$, and $\sum_{z \in \mathcal{Z}} K_{x,z}=1$
Given two discrete random variables $X \in \mathcal{X}$ and $Z\in \mathcal{Z}$, the corresponding conditional distribution $p(z|x)$ corresponds, in an information-theoretical perspective, to a discrete memoryless channel with a channel matrix $K$, \textit{i.e.}, such that $K[x,z] = p(z|x)$ \citep{cover1999elements}. This matrix is row-stochastic: $K[x,z] \geq 0$, for any  $x \in \mathcal{X}$ and $z\in \mathcal{Z}$, and $\sum_{z \in \mathcal{Z}} K[x,z]=1$

The comparison of different channels (equivalently, different stochastic matrices) is an object of study with many applications in different fields \citep{cohen1998comparisons}. That study addresses preorder relations between channels and their properties. One such preorder, named \textit{degradation preorder} (or \textit{Blackwell preorder}) and defined next, was used by Kolchinsky to obtain a particular II measure \citep{kolchinsky2022novel}.

Consider the distribution $p(y_1, ..., y_n, t)$ and the channels $K^{(i)}$ between $T$ and each $Y_i$, that is, $K^{(i)}$ is a $|\mathcal{T}| \times |\mathcal{Y}_i|$ row-stochastic matrix with the conditional distribution $p(y_i|t)$.

\begin{definition}
We say that channel $K^{(i)}$ is a \emph{degradation} of channel $K^{(j)}$, and write $K^{(i)}\preceq_{d} K^{(j)}$ or $Y_i \preceq_{d} Y_j$, if there exists a channel $K^U$ from $Y_j$ to $Y_i$, \textit{i.e.}, a 
$|\mathcal{Y}_j| \times |\mathcal{Y}_i|$ row-stochastic matrix, such that $K^{(i)} = K^{(j)}K^U$.
\end{definition}
Intuitively, consider 2 agents, one with access to $Y_i$ and the other with access to $Y_j$. The agent with access to $Y_j$ has at least as much information about $T$ as the one with access to $Y_i$, because it has access to channel $K^U$, which allows sampling from $Y_i$, conditionally on $Y_j$ \citep{rauh2017coarse}. \citet{blackwell1953equivalent} showed that this is equivalent to saying that, for whatever decision game where the goal is to predict $T$ and for whatever utility function, the agent with access to $Y_i$ cannot do better, on average, than the agent with access to $Y_j$.

Based on the degradation/Blackwell preorder, \citet{kolchinsky2022novel} introduced the \textit{degradation II measure}, by plugging the ``$\preceq_{d}$" preorder in \eqref{II}:
\begin{equation}\label{d}
I_\cap^d (Y_1, ..., Y_n) := \sup_{Q:\; Q \preceq_d Y_i, \; i\in\{1,..,n\} } I(Q;T).
\end{equation}
As noted by \citet{kolchinsky2022novel}, this II measure has the following operational interpretation. Suppose $n=2$ and consider agents 1 and 2, with access to variables $Y_1$ and $Y_2$, respectively. Then $I_\cap^d (Y_1, Y_2)$ is the maximum information that agent 1 (resp. 2) can have w.r.t. $T$ without being able to do better than agent 2 (resp. 1) on any decision problem that involves guessing $T$. That is, the degradation II measure quantifies the existence of a dominating strategy for any guessing game.

\section{Other Preorders and Corresponding II Measures}
\subsection{The ``Less Noisy" Preorder}
\citet{krner1977comparison} introduced and studied preorders between channels with the same input. We follow most of their definitions and change others when appropriate. We interchangeably write $Y_1 \preceq Y_2$ to mean $K^{(1)} \preceq K^{(2)}$, where $K^{(1)}$ and $ K^{(2)}$ are the channel matrices as defined above.

Before introducing the next channel preorder, we need to review the notion of Markov chain \citep{cover1999elements}. We say that three random variables $X_1$, $X_2$, $X_3$ form a Markov chain, and write $X_1 \rightarrow X_2 \rightarrow X_3$, if the following equality holds: $p(x_1,x_3|x_2) = p(x_1|x_2)\, p(x_3|x_2)$, \textit{i.e.}, if $X_1$ and $X_3$ are conditionally independent, given $X_2$. Of course, $X_1 \rightarrow X_2 \rightarrow X_3$ if and only if $X_3 \rightarrow X_2 \rightarrow X_1$.

\begin{definition}
We say that channel $K^{(2)}$ is \emph{less noisy} than channel $K^{(1)}$, and write $K^{(1)}\preceq_{ln} K^{(2)}$, if for any discrete random variable $U$ with finite support, such that both $U \rightarrow T \rightarrow Y_1$ and $U \rightarrow T \rightarrow Y_2$ hold, we have that $I(U;Y_1) \leq I(U;Y_2)$.
\end{definition}

The \textit{less noisy} preorder has been primarily used in network information theory to study the capacity regions of broadcast channels \citep{makur2017less} and the secrecy capacity of the wiretap and eavesdrop channels problem \citep{{csiszar2011information}}. Secrecy capacity ($C_S$) is the maximum rate at which information can be transmitted over a communication channel while keeping the communication secure from eavesdroppers - that is, having zero information leakage \citep{wyner1975wire, bassi2019secret}. It has been shown that $C_S > 0$ unless $K^{(2)}\preceq_{ln} K^{(1)}$, where $C_S$ is the secrecy capacity of the Wyner wiretap channel with $K^{(2)}$ as the main channel and $K^{(1)}$ as the eavesdropper channel \citep[Corollary 17.11]{csiszar2011information}.

Plugging the \textit{less noisy} preorder $\preceq_{ln}$ in \eqref{II} yields a new II measure
\begin{equation}\label{ln}
I_\cap^{ln} (Y_1, ..., Y_n) := \sup_{Q:\; Q \preceq_{ln} Y_i, \; i\in\{1,..,n\} } I(Q;T).
\end{equation}
Intuitively, $I_\cap^{ln} (Y_1, ..., Y_n)$ is the most information that a channel $K^Q$ can have about $T$ such that it is \emph{less noisy} than any other channel $K^{(i)}, i=1, ..., n$, that is, a channel that leads to zero secrecy capacity, when compared to any other channel $K^{(i)}$.

%is the information that the random variable $Q$ that is most informative about $T$, but such that every $Y_i$ is less noisy than $Q$, has about $T$. In the $n=2$ case, $I_\cap^{ln} (Y_1, Y_2)$ is the highest information rate and if $Y_1 \preceq_{ln} Y_2$, then $I_\cap^{ln} (Y_1, Y_2) = I(Y_1;T)$, and consequently $C_S = 0$. On the other hand, if there is no \textit{less noisy} relation between $Y_1$ and $Y_2$, then the secrecy capacity is not null and $0 \leq I_\cap^{ln} (Y_1, Y_2) < \min\{I(Y_1;T), I(Y_2;T)\}$.

\subsection{The ``More Capable" preorder}
The next preorder we consider, termed ``\textit{more capable}", was used in calculating the capacity region of broadcast channels \citep{gamal1979capacity} or in deciding whether a system is more secure than another \citep{clark2005quantitative}. See the book by \citet{cohen1998comparisons}, for more applications of the \textit{degradation, less noisy}, and \textit{more capable} preorders.

\begin{definition}
We say that channel $K^{(2)}$ is \emph{more capable} than $K^{(1)}$, and write $K^{(1)}~\preceq_{mc}~K^{(2)}$, if for any distribution $p(t)$ we have $I(T;Y_1) \leq I(T;Y_2)$.
\end{definition}

Inserting the ``\textit{more capable}" preorder into \eqref{II} leads to 
\begin{equation}\label{mc}
I_\cap^{mc} (Y_1, ..., Y_n) := \sup_{Q:\; Q \preceq_{mc} Y_i, \; i\in\{1,..,n\} } I(Q;T),
\end{equation}
that is, $I_\cap^{mc} (Y_1, ..., Y_n)$ is the information that the `largest' (in the more capable sense), but no larger than any $Y_i$, random variable $Q$ has w.r.t. $T$. Whereas under the degradation preorder, it is guaranteed that if $Y_1 \preceq_d Y_2$, then agent 2 will make better decisions, for whatever decision game, on average, under the ``more capable" preorder such a guarantee is not available. We do, however, have the guarantee that if $Y_1 \preceq_{mc} Y_2$, then for whatever distribution $p(t)$ we know that agent 2 will always have more information about $T$ than agent 1. This has an interventional approach meaning: if we intervene on variable $T$ by changing its distribution $p(t)$ in whichever way we see fit, we have that $I(Y_1;T) \leq I(Y_2;T)$ (assuming that the distribution $p(Y_1, ..., Y_n, T)$ can be modeled as a set of channels from $T$ to each $Y_i$). That is, $I_\cap^{mc} (Y_1, ..., Y_n)$ is the highest information that a channel $K^Q$ can have about $T$ such that for any change in $p(t)$, $K^Q$ knows less about $T$ than any $Y_i, i = 1, ..., n$. Since PID is concerned with decomposing a distribution that has fixed $p(t)$, the ``more capable" measure is concerned with the mechanism by which $T$ generates $Y_1, ..., Y_n$, for any $p(t)$, and not concerned with the specific distribution $p(t)$ yielded by $p(Y_1, ..., Y_n, T)$.

For the sake of completeness, we could also study the II measure that would result from the capacity preorder. Recall that the capacity of the channel from a variable $X$ to another variable $Z$, which is only a function of the conditional distribution $p(z|x)$, is defined as \citep{cover1999elements}
\begin{equation} 
C = \max_{p(x)} I(X;Z). \label{eq:capacity}    
\end{equation}
\begin{definition}
Write $W \preceq_{c} V$ if the capacity of V is at least as large as the capacity of W.
\end{definition}
Even though it is clear that $W \preceq_{mc} V \; \Rightarrow  \; W \preceq_c V$, the $\preceq_{c}$ preorder does not comply with the first of Kolchinsky's axioms (since the definition of capacity involves the choice of a particular marginal that achieves the maximum in \eqref{eq:capacity}, which may not coincide with the marginal corresponding to $p(y_1,...,y_n,t)$), which is why we don't define an II measure based on it. 

\subsection{The ``Degradation/Supermodularity" preorder}
In order to introduce the last II measure, we follow the work and notation of  \citet{americo2021channel}. Given two real vectors $r$ and $s$ with dimension $n$, let $r \lor s := (\max(r_1, s_1), ..., \max(r_n, s_n))$ and $r \land s := (\min(r_1, s_1), ..., \min(r_n, s_n))$. Consider an arbitrary channel $K$ and let $K_i$ be its $i$th column. From $K$, we may define a new channel, which we construct column by column using the \emph{JoinMeet} operator $\diamond_{i,j}$. Column $l$ of the new channel is defined, for $i \neq j$, as
$$(\diamond_{i,j}K)_l = \begin{cases} K_i \lor K_j , & \mbox{if } l = i \\ K_i \land K_j, & \mbox{if } l = j \\ K_l, & \emph{otherwise} \end{cases}.$$
\citet{americo2021channel} used this operator to define the following two new preorders.
Intuitively, the operator $\diamond_{i,j}$ makes the rows of the channel matrix more similar to each other, by putting in column $i$ all the maxima and in column $j$ the minima, between every pair of elements in columns $i$ and $j$ of every row. In the following definitions, the \textit{s} stands for supermodularity, a concept we need not introduce in this work.

\begin{definition}
We write $W \preceq_{s} V$ if there exists a finite collection of tuples $(i_k, j_k)$ such that $W = \diamond_{i_1,j_1} ( \diamond_{i_2,j_2} ( ... (\diamond_{i_m,j_m} V ))$.
\end{definition}

\begin{definition}
Write $W \preceq_{ds} V$ if there are $m$ channels $U^{(1)}, ..., U^{(m)}$ such that $W \preceq_0 U^{(1)} \preceq_1 U^{(2)} \preceq_2 ... \preceq_{m-1} U^{(m)} \preceq_m V$, where each $\preceq_i$ stands for $\preceq_d$ or $\preceq_s$. We call this the \emph{degradation/supermodularity} preorder.
\end{definition}

Using the ``\emph{degradation/supermodularity}" (ds) preorder, we define the ds II measure as:
\begin{equation}\label{ds}
I_\cap^{ds} (Y_1, ..., Y_n) := \sup_{Q:\; Q \preceq_{ds} Y_i, \; i\in\{1,..,n\} } I(Q;T).
\end{equation}
The \emph{ds} preorder was recently introduced in the context of \textit{core-concave entropies} \citep{americo2021channel}. Given a core-concave entropy $H$, the \textit{leakage} about $T$ through $Y_1$ is defined as $I_H(T;Y_1) = H(T)-H(T|Y_1)$ . In this work, we are mainly concerned with Shannon's entropy $H$, but as we will elaborate in the future work section below, one may apply PID to other core-concave entropies. Although the operational interpretation of the \emph{ds} preorder is not yet clear, it has found applications in privacy/security contexts and in finding the most secure deterministic channel (under some constraints) \citep{americo2021channel}.

\subsection{Relations Between Preorders}
\citet{krner1977comparison} proved that $W \preceq_{d} V \Rightarrow W \preceq_{ln} V \Rightarrow W \preceq_{mc} V$ and gave examples to show that the reverse implications do not hold in general.
As \citet{americo2021channel} note, the degradation ($\preceq_d$), supermodularity ($\preceq_s$), and degradation/supermodularity ($\preceq_{ds}$) preorders are \emph{structural preorders}, in the sense that they only depend on the conditional probabilities that are defined by each channel. On the other hand, the \textit{less noisy} and \textit{more capable} preorders are concerned with information measures resulting from different distributions. It is trivial to see (directly from the definition) that the degradation preorder implies the degradation/supermodular preorder. \citet{americo2021channel} showed that the degradation/supermodular preorder implies the more capable preorder. The set of implications we have seen is schematically depicted in Figure \ref{fig:implications}.

\begin{figure}[H] \label{porelation}
	\centering
	\includegraphics[scale=0.25]{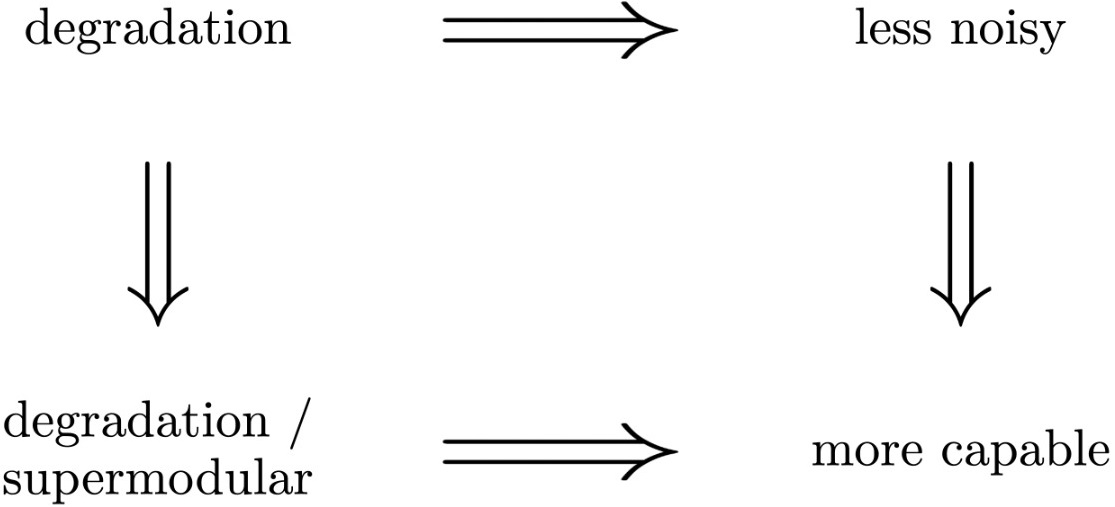}
	\caption{Implications satisfied by the preorders. The reverse implications do not hold in general.}\label{fig:implications}
\end{figure}

These relations between the preorders, for any set of variables $Y_1, ..., Y_n, T$, imply via the corresponding definitions that
\begin{align} \label{desigualdades1}
I_\cap^{d}(Y_1, ..., Y_n) \leq I_\cap^{ln}(Y_1, ..., Y_n) \leq I_\cap^{mc}(Y_1, ..., Y_n)
\end{align}
and
\begin{align} \label{desigualdades2}
I_\cap^{d}(Y_1, ..., Y_n) \leq I_\cap^{ds}(Y_1, ..., Y_n) \leq I_\cap^{mc}(Y_1, ..., Y_n).
\end{align}

These implications, in turn, imply the following result.
\begin{theorem}
The preorders $\preceq_{ln}$, $\preceq_{mc}$, and $\preceq_{ds}$, satisfy Kolchinsky's axioms.
\end{theorem}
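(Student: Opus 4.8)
The plan is to verify each of Kolchinsky's three axioms for all three preorders at once, exploiting the implication chains established in the previous subsection so as to avoid treating nine separate cases by hand. The key structural observation is that all three target preorders are sandwiched between the degradation preorder $\preceq_d$ (the strongest, since $\preceq_d$ implies each of the others) and the more capable preorder $\preceq_{mc}$ (the weakest, since each of the others implies $\preceq_{mc}$). This lets me discharge the ``upper'' axiom (monotonicity of mutual information) at $\preceq_{mc}$ and the ``lower'' axiom (the constant/joint comparisons) at $\preceq_d$, and then transport each conclusion along the appropriate implication.

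Reflexivity (axiom (ii)) is immediate and checked directly in each case: $I(T;Y)\leq I(T;Y)$ for every $p(t)$ gives $Y\preceq_{mc}Y$; the same trivial inequality for every admissible $U$ gives $Y\preceq_{ln}Y$; and the identity garbling $K^U=I$ gives $Y\preceq_d Y$, whence $Y\preceq_{ds}Y$ via a one-step chain. For monotonicity of mutual information (axiom (i)), it suffices to verify it for $\preceq_{mc}$, because the implications $\preceq_{ln}\Rightarrow\preceq_{mc}$ and $\preceq_{ds}\Rightarrow\preceq_{mc}$ carry the conclusion to the other two. But $Y_i\preceq_{mc}Y_j$ means, by definition, $I(T;Y_i)\leq I(T;Y_j)$ for \emph{every} input distribution $p(t)$; specializing to the actual marginal $p(t)$ induced by $p(y_1,\dots,y_n,t)$ yields axiom (i) directly.

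The substance of the proof is axiom (iii), and here I would work at the degradation preorder and push the result outward using $\preceq_d\Rightarrow\preceq_{ln}$, $\preceq_d\Rightarrow\preceq_{ds}$, and $\preceq_d\Rightarrow\preceq_{mc}$. It therefore suffices to exhibit, for the degradation preorder, explicit garbling channels witnessing $C\preceq_d Y_i$ and $Y_i\preceq_d(Y_1,\dots,Y_n)$. For the lower comparison, take $K^U$ to be the channel from $Y_i$ to $C$ that places all its mass on the constant value; then $K^{(i)}K^U$ reproduces the (constant-row) channel from $T$ to $C$. For the upper comparison, take $K^U$ to be the deterministic coordinate-projection channel from $(Y_1,\dots,Y_n)$ onto its $i$-th component; composing the joint channel with this projection marginalizes out the remaining coordinates and returns exactly $K^{(i)}=p(y_i\mid t)$. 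Both identities are routine once the channels are written down, and the implications then deliver $C\preceq Y_i\preceq(Y_1,\dots,Y_n)$ for each of $\preceq_{ln}$, $\preceq_{mc}$, and $\preceq_{ds}$.

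The main obstacle is organizational rather than computational: one must get the direction of the implications right. Because axiom (iii) is a pair of lower/upper comparisons, it has to be established at the \emph{strongest} preorder $\preceq_d$ and inherited by the weaker ones, whereas axiom (i) is naturally checked at the \emph{weakest} preorder $\preceq_{mc}$ and inherited by the stronger ones. Confusing these two directions would break the argument, so the care lies in matching each axiom to the correct end of the chain; the garbling-channel constructions for $\preceq_d$ themselves are elementary.
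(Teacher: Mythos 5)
Your proof is correct, and its overall skeleton matches the paper's: both arguments discharge axiom (i) by observing that every preorder in play implies $\preceq_{mc}$, for which monotonicity of mutual information is immediate upon specializing to the actual marginal $p(t)$, and both treat reflexivity as a routine check. The one organizational difference is in axiom (iii): the paper verifies $C\preceq Y_i$ separately from the definition of each preorder (constant-row channels give zero mutual information for every $U$ and every $p(t)$) and proves $Y_i\preceq(Y_1,\dots,Y_n)$ directly for $\preceq_{ln}$ and $\preceq_{mc}$ via the trivial inequalities $I(U;Y_i)\leq I(U;Y)$ and $I(T;Y_i)\leq I(T;Y)$, reserving the degradation route only for $\preceq_{ds}$; you instead establish both comparisons once at $\preceq_d$ with explicit garbling channels (the all-mass-on-the-constant channel and the coordinate projection) and push them through $\preceq_d\Rightarrow\preceq_{ln},\preceq_{mc},\preceq_{ds}$. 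Your version is slightly more uniform and makes the witnesses explicit; the paper's is marginally more self-contained for $\preceq_{ln}$ and $\preceq_{mc}$ since it does not lean on the K\"orner--Marton implications for axiom (iii). Both are valid, and your caution about matching each axiom to the correct end of the implication chain is exactly the right point to be careful about.
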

\begin{proof}
Let $i \in \{1, ..., n\}$. Since any of the introduced preorders implies the \emph{more capable} preorder, it follows that they all satisfy the axiom of monotonicity of mutual information. Axiom 2 is trivially true since reflexivity is guaranteed by the definition of preorder. As for axiom 3, the rows of a channel corresponding to a variable $C$ taking a constant value must all be the same (and yield zero mutual information with any target variable $T$). From this, it is clear that any $Y_i$ satisfies $C \preceq Y_i$ for any of the introduced preorders, by definition of each preorder. To see that $Y_i \preceq Y=(Y_1, ..., Y_n)$ for the \emph{less noisy} and the \emph{more capable}, recall that for any $U$ such that $U \rightarrow T \rightarrow Y_i$ and $U \rightarrow T \rightarrow Y$, it is trivial that $I(U;Y_i) \leq I(U;Y)$, hence $Y_i \preceq_{ln} Y$. A similar argument may be used to show that $Y_i \preceq_{mc} Y$, since $I(T;Y_i) \leq I(T;Y)$. Finally, to see that $Y_i \preceq_{ds} (Y_1, ..., Y_n)$, note that $Y_i \preceq_d (Y_1, ..., Y_n)$ \citep{kolchinsky2022novel}, hence $Y_i \preceq_{ds} (Y_1, ..., Y_n)$. 
\end{proof}

\section{Optimization Problems}
We now focus on some observations about the optimization problems of the introduced II measures. All problems seek to maximize $I(Q;T)$ (under different constraints) as a function of the conditional distribution $p(q|t)$, equivalently with respect to the channel from $T$ to $Q$, which we will denote as $K^Q:=K^{Q|T}$. For fixed $p(t)$ -- as is the case in PID -- $I(Q;T)$ is a convex function of $K^Q$ \cite[Theorem 2.7.4]{cover1999elements}. As we will see, the admissible region of all problems is a compact set and, since $I(Q;T)$ is a continuous function of the parameters of $K^Q$, the supremum will be achieved, thus we replace $\sup$ with $\max$.

As noted by \citet{kolchinsky2022novel}, the computation of \eqref{d} can be rewritten as an optimization problem using auxiliary variables such that it involves only linear constraints, and since the objective function is convex, its maximum is attained at one of the vertices of the admissible region. The computation of the other measures is not as simple, as shown in the following subsections. 

\subsection{The ``Less Noisy" Preorder}
To solve \eqref{ln}, we may use one of the necessary and sufficient conditions presented by \citet[Theorem 1]{makur2017less}. For instance, let $V$ and $W$ be two channels with input $T$, and $\Delta^{T-1}$ be the probability simplex of the target $T$. Then, $V \preceq_{ln} W$ if and only if, for any pair of distributions $p(t), q(t) \in \Delta^{T-1}$, the inequality 
\begin{equation}
\chi^2 (p(t)W || q(t)W) \geq \chi^2 (p(t)V || q(t)V) \label{eq:makur}
\end{equation}
holds, where $\chi^2$ denotes the $\chi^2$-distance\footnote{The $\chi^2$ distance between two vectors $u$ and $v$ of dimension $n$ is given by $\chi^2(u ||  v) = \sum_{i=1}^n (u_i - v_i)^2 / v_i$.} between two vectors. Notice that $p(t)W$ is the distribution of the output of channel $W$ for input distribution $p(t)$; thus, intuitively, the condition in \eqref{eq:makur} means that the two output distributions of the \textit{less noisy} channel are more different from each other than those of the other channel. Hence, computing $I_\cap^{ln}(Y_1, ..., Y_n)$ can be formulated as solving the problem
\begin{equation*}
\begin{aligned}
\max_{K^Q} \quad & I(Q;T)\\
\textrm{s.t.} &\; K^Q \; \mbox{is a stochastic matrix,}  \\
  & \; \forall p(t), q(t) \in \Delta^{T-1} , \forall i \in \{1, ..., n\}, \; \chi^2 (p(t) K^{(i)} || q(t)K^{(i)}) \geq \chi^2 (p(t)K^Q  || q(t)K^Q ). \\
\end{aligned}
\end{equation*}

Although the restriction set is convex since the $\chi^2$-divergence is an $f$-divergence, with $f$ convex \citep{csiszar2011information}, the problem is intractable because we have an infinite (uncountable) number of restrictions. One may construct a set $\mathcal{S}$ by taking an arbitrary number of samples $S$ of $p(t) \in \Delta^{T-1}$ to define the problem
\begin{equation}
\begin{aligned} \label{ln_approx}
\max_{K^Q} \quad & I(Q;T), \\
\textrm{s.t.} &\; K^Q \; \mbox{is a stochastic matrix,}  \\
  &\; \forall p(t), q(t) \in \mathcal{S}, \forall i \in \{1, ..., n\}, \; \chi^2 (p(t)K^{(i)} || q(t)K^{(i)}) \geq \chi^2 (p(t)K^Q || q(t)K^Q). \\
\end{aligned}
\end{equation}
The above problem yields an upper bound on $I_\cap^{ln}(Y_1, ..., Y_n)$. 

\subsection{The ``More Capable" Preorder}
To compute $I_\cap^{mc}(Y_1, ..., Y_n)$, we define the problem
\begin{equation}\label{mc2}
\begin{aligned}
\max_{K^Q} \quad & I(Q;T)\\
\textrm{s.t.} &\; K^Q \; \mbox{is a stochastic matrix,}  \\
  &\; \forall p(t) \in \Delta^{T-1}, \forall i \in \{1, ..., n\}, \; I(Y_i;T) \geq I(Q;T), \\
\end{aligned}
\end{equation}
which also leads to a convex restriction set, because $I(Q;T)$ is a convex function of $K^Q$. We discretize the problem in the same manner to obtain a tractable version
\begin{equation}
\begin{aligned} \label{mc_approx}
\max_{K^Q} \quad & I(Q;T)\\
\textrm{s.t.} &\; K^Q \; \mbox{is a stochastic matrix,}  \\
  & \; \forall p(t) \in \mathcal{S}, \forall i \in \{1, ..., n\}, \; I(Y_i;T) \geq I(Q;T), \\
\end{aligned}
\end{equation}
which also yields an upper bound on $I_\cap^{mc}(Y_1, ..., Y_n)$. 

\subsection{The ``Degradation/Supermodularity" Preorder}
The final introduced measure, $I_\cap^{ds}(Y_1, ..., Y_n)$, is given by
\begin{equation}
\begin{aligned}
\max_{K^Q} \quad & I(Q;T)\\
\textrm{s.t.} &\; K^Q \; \mbox{is a stochastic matrix,}  \\
  & \; \forall i, K^Q \preceq_{ds} K^{(i)}. \\
\end{aligned}
\end{equation}
To the best of our knowledge, there is currently no known condition to check if $K^Q~\preceq_{ds}~K^{(i)}$.

\section{Relation with Existing PID Measures}
\citet{griffith2014intersection} introduced a measure of II as
\begin{align} \label{griffithII}
I_\cap^{\triangleleft} (Y_1, ..., Y_n) := \max_Q I(Q;T), \text{ such that } \forall i \text{ } Q \triangleleft Y_i,
\end{align}
with the preorder relation $\triangleleft$ defined by $A \triangleleft B$ if $A=f(B)$ for some deterministic function $f$. That is, $I_\cap^{\triangleleft}$ quantifies redundancy as the presence of deterministic relations between input and target. If $Q$ is a solution of \eqref{griffithII}, then there exist functions $f_1, ..., f_n$, such that $Q = f_i(Y_i), i=1, ..., n$, which implies that, for all $i, T \rightarrow Y_i \rightarrow Q$ is a Markov chain. Therefore, $Q$ is an admissible point of the optimization problem that defines $I_\cap^{d} (Y_1, ..., Y_n)$, thus we have that $I_\cap^{\triangleleft} (Y_1, ..., Y_n) \leq I_\cap^{d} (Y_1, ..., Y_n)$.

\citet{barrett2015exploration} introduced the so-called \textit{minimum mutual information} (MMI) measure of bivariate redundancy as
$$ I_\cap^{\text{MMI}} (Y_1, Y_2) := \min \{I(T; Y_1), I(T; Y_2)\}.
$$ 
It turns out that, if $(Y_1, Y_2)$ is jointly Gaussian and $T$ is univariate, then most of the introduced PIDs in the literature are equivalent to this measure \citep{barrett2015exploration}. As noted by \citet{kolchinsky2022novel}, it may be generalized to more than two sources,
$$ I_\cap^{\text{MMI}} (Y_1, ..., Y_n) := \sup_Q I(Q;T), \text{ such that } \forall i \text{ } I(Q;T) \leq I(Y_i; T),$$
which allows us to trivially conclude that, for any set of variables $Y_1, ..., Y_n, T$,
$$I_\cap^{\triangleleft} (Y_1, ..., Y_n) \leq I_\cap^{d}(Y_1, ..., Y_n) \leq I_\cap^{mc}(Y_1, ..., Y_n) \leq I_\cap^{\text{MMI}}(Y_1, ..., Y_n).$$

One of the appeals of measures of II, as defined by \citet{kolchinsky2022novel}, is that it is the underlying preorder that determines what is intersection - or redundant - information. For example, consider the $I_\cap^\text{GH}$ measure \citep{griffith2015quantifying} in the $n=2$ case. Its solution, $Q$, satisfies $T \perp Q \, |\,  Y_1$ and $T \perp Q \, |\,  Y_2$, that is, if either $Y_1$ or $Y_2$ are known, $Q$ has no additional information about $T$. Such is not necessarily the case for the \textit{less noisy} or the \textit{more capable} II measures, that is, the solution $Q$ may have additional information about $T$ even when a source is known. However, the three proposed measures satisfy the following property: any solution $Q$ of the optimization problem satisfies
$$ 
\forall i \in \{1, ..., n \}, \forall t \in S_T, I(Y_i; T=t) \geq I(Q; T=t),
$$
where $S_T$ is the support of $T$ and $I(T=t; Y_i)$ refers to the so-called specific information \citep{williams2010nonnegative, deweese1999measure}. That is, independently of the outcome of $T$, $Q$ has less specific information about $T=t$ than any source variable $Y_i$. This can be seen by noting that any of the introduced preorders imply the more capable preorder. Such is not the case, for example, for $I_\cap^{\text{MMI}}$, which is arguably one of the reasons why it has been criticized for depending only on the amount of information, and not on its content \cite{kolchinsky2022novel}.
As mentioned, there is not much consensus as to what properties a measure of II should satisfy. The three proposed measures for partial information decomposition do not satisfy the so-called \textit{Blackwell property} \citep{bertschinger2014quantifying,rauh2017extractable}:
\begin{definition}
An intersection information measure $I_\cap(Y_1, Y_2)$ is said to satisfy the Blackwell property if the equivalence $Y_1 \preceq_d Y_2 \Leftrightarrow I_\cap(Y_1, Y_2) = I(T;Y_1)$ holds.
\end{definition}
This definition is equivalent to demanding that $Y_1 \preceq_d Y_2$, if and only if $Y_1$ has no unique information about $T$. Although the $(\Rightarrow)$ implication holds for the three proposed measures, the reverse implication does not, as shown by specific examples presented by \citet{krner1977comparison}, which we will mention below. If one defines the ``more capable property" by replacing the \textit{degradation} preorder with the \textit{more capable} preorder in the original definition of the Blackwell property, then it is clear that measure $k$ satisfies the $k$ property, with $k$ referring to any of the three introduced intersection information measures. 

Also often studied in PID is the \textit{identity property} (IP) \citep{harder2013bivariate}. Let the target $T$ be a copy of the source variables, that is, $T = (Y_1, Y_2)$. An II measure $I_\cap$ is said to satisfy the IP if
$$I_\cap(Y_1, Y_2) = I(Y_1;Y_2).$$
Criticism was levied against this proposal for being too restrictive \citep{rauh2014reconsidering, james2018unique}. A less strict property was introduced by \cite{ince2017measuring}, under the name \textit{independent identity property} (IIP). If the target $T$ is a copy of the input, an II measure is said to satisfy the IIP if
$$I(Y_1;Y_2)=0 \;\; \Rightarrow \;\; I_\cap(Y_1, Y_2) = 0.$$
Note that the IIP is implied by the IP, but the reverse does not hold. It turns out that all the introduced measures, just like the degradation II measure, satisfy the IIP, but not the IP, as we will show. This can be seen from \eqref{desigualdades1}, \eqref{desigualdades2}, and the fact that $I_\cap^{mc}(Y_1, Y_2 \rightarrow (Y_1, Y_2))$ equals 0 if $I(Y_1;Y_2)=0$, as we argue now. Consider the distribution where $T$ is a copy of $(Y_1, Y_2)$, presented in Table \ref{copy_dist}.
\begin{table}[H]
\caption{Copy distribution}\label{copy_dist}
\vspace{0.2cm}
\begin{center}
    \begin{tabular}{|c|c|c|c|}
    \hline
    T & $Y_1$ & $Y_2$ & $p(t,y_1,y_2)$ \\
    \hline
    (0,0) & 0 & 0 & $p(T=(0,0))$ \\
    \hline
    (0,1) & 0 & 1 & $p(T=(0,1))$ \\
    \hline
    (1,0) & 1 & 0 & $p(T=(1,0))$ \\
    \hline
    (1,1) & 1 & 1 & $p(T=(1,1))$ \\
    \hline
\end{tabular}\quad
\end{center}
\end{table}
We assume that each of the 4 events has non-zero probability. In this case, channels $K^{(1)}$ and $K^{(2)}$ are given by
\[
  K^{(1)} = \left[\begin{array}{cc}
    1 & 0 \\
    1 & 0 \\
    0 & 1 \\
    0 & 1
  \end{array}\right], \quad
  K^{(2)} = \left[\begin{array}{cc}
    1 & 0 \\
    0 & 1 \\
    1 & 0 \\
    0 & 1
  \end{array}\right].
\]
Note that for any distribution $p(t)=\left[p(0,0), p(0,1), p(1,0), p(1,1)\right]$, if $p(1,0)=p(1,1)=0$, then $I(T;Y_1)=0$, which implies that, for any of such distributions, the solution $Q$ of \eqref{mc2} must satisfy $I(Q;T)=0$, thus the first and second rows of $K^Q$ must be the same. The same goes for any distribution $p(t)$ with $p(0,0)=p(0,1)=0$. On the other hand, if $p(0,0)=p(1,0)=0$ or $p(1,1)=p(0,1)=0$, then $I(T;Y_2)=0$, implying that $I(Q;T)=0$ for such distributions. Hence, $K^Q$ must be an arbitrary channel (that is, a channel that satisfies $Q \perp T$), yielding $I_\cap^{mc}(Y_1, Y_2) = 0$.  

Now recall the Gács-Korner \textit{common information} \citep{gacs1973common} defined as
\begin{equation}
\begin{aligned}
C(Y_1 \land Y_2) := &\sup_Q &H(Q) \\
&\textrm{ s.t.} &Q \triangleleft Y_1 \\
& &Q \triangleleft Y_2
\end{aligned}
\end{equation}
We will use a similar argument and slightly change the notation to show the following result.
\begin{theorem}
Let $T = (X, Y)$ be a copy of the source variables. Then $I_\cap^{ln}(X,Y) = I_\cap^{ds}(X,Y) = I_\cap^{mc}(X,Y) = C(X \land Y)$.
\end{theorem}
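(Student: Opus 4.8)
The plan is to collapse the whole chain of equalities by squeezing all three measures between a common lower and upper bound, both equal to $C(X \land Y)$. From the established relations \eqref{desigualdades1} and \eqref{desigualdades2} we already have $I_\cap^{d}(X,Y) \le I_\cap^{ln}(X,Y) \le I_\cap^{mc}(X,Y)$ and $I_\cap^{d}(X,Y) \le I_\cap^{ds}(X,Y) \le I_\cap^{mc}(X,Y)$, so it suffices to prove the two bounds $C(X \land Y) \le I_\cap^{d}(X,Y)$ and $I_\cap^{mc}(X,Y) \le C(X \land Y)$. Since $I_\cap^{d} \le I_\cap^{mc}$, these force $C(X \land Y) \le I_\cap^{d} \le I_\cap^{mc} \le C(X \land Y)$, collapsing every inequality to an equality and pinning $I_\cap^{ln}$ and $I_\cap^{ds}$ (which lie in between) to $C(X \land Y)$ as well; as a byproduct we also obtain $I_\cap^{d}(X,Y) = C(X \land Y)$.

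For the lower bound I would let $V$ be the Gács-Korner common random variable attaining $C(X \land Y) = H(V)$, so that $V \triangleleft X$ and $V \triangleleft Y$. Since $T = (X,Y)$, $V$ is a deterministic function of $T$, whence $H(V\mid T) = 0$ and $I(V;T) = H(V) = C(X \land Y)$. Because $\triangleleft$ implies $\preceq_d$ (as recalled in the discussion of $I_\cap^{\triangleleft}$), the variable $V$ is admissible for the optimization defining $I_\cap^{d}$, so $I_\cap^{d}(X,Y) \ge I(V;T) = C(X \land Y)$.

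The main work, and the step I expect to be the obstacle, is the upper bound. I would take any feasible $Q$ for $I_\cap^{mc}$, i.e. $Q \preceq_{mc} X$ and $Q \preceq_{mc} Y$, and show that $K^Q$ must be constant on the connected components that define $V$. The key device is exactly the degenerate-input trick already used above for the IIP: given two support outcomes $t=(x,y)$ and $t'=(x,y')$ sharing their first coordinate, place an input distribution supported only on $\{t,t'\}$. Then $X$ is constant, so $I(T;X)=0$, and $Q \preceq_{mc} X$ forces $I(T;Q)=0$ under this two-point distribution; as both atoms carry positive mass, this gives $p(q\mid t)=p(q\mid t')$, i.e. the rows of $K^Q$ at $t$ and $t'$ coincide. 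The symmetric argument with $Q \preceq_{mc} Y$ handles outcomes sharing their second coordinate. Chaining these equalities along the edges of the Gács-Korner graph on $\mathrm{supp}(p)$ (with edges joining outcomes that agree in one coordinate) and using transitivity of equality, all rows of $K^Q$ within a single component are identical, so $p(q\mid t)$ depends on $t$ only through $V$.

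Finally I would cash this in: when $p(q\mid t) = p(q\mid V(t))$ a direct entropy computation gives $H(Q\mid T) = H(Q\mid V)$, hence $I(Q;T) = H(Q) - H(Q\mid V) = I(Q;V) \le H(V) = C(X \land Y)$. Taking the supremum over feasible $Q$ yields $I_\cap^{mc}(X,Y) \le C(X \land Y)$, and combining with the lower bound and the sandwich above closes the proof. The delicate point to get right is the combinatorial passage from \emph{rows equal for outcomes sharing a coordinate} to \emph{rows equal on each Gács-Korner component}, which relies on identifying $V$ precisely with the connected-component structure of that graph; the remaining manipulations are routine.
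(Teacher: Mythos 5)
Your proposal is correct and follows essentially the same route as the paper: both establish the lower bound via $I_\cap^{d}(X,Y)=C(X\land Y)$ and the sandwich \eqref{desigualdades1}--\eqref{desigualdades2}, and both prove $I_\cap^{mc}(X,Y)\leq C(X\land Y)$ by using two-point input distributions to force equal rows of $K^Q$ across each connected component of the support graph (your graph on outcomes has the same component structure as the paper's bipartite graph on $\mathcal{X}\cup\mathcal{Y}$). Your closing step, bounding $I(Q;T)=I(Q;V)\leq H(V)$ directly for an arbitrary feasible $Q$, is in fact slightly tidier than the paper's passage through a deterministic choice of rows, but it is a cosmetic rather than a substantive difference.
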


\begin{proof}
As shown by \citet{kolchinsky2022novel}, $I_\cap^{d}(X,Y) = C(X \land Y)$. Thus, \eqref{desigualdades1} implies that $I_\cap^{mc}(X,Y) \geq C(X \land Y)$. The proof will be complete by showing that $I_\cap^{mc}(X,Y) \leq C(X \land Y)$. Construct the bipartite graph with vertex set $\mathcal{X} \cup \mathcal{Y}$ and edges $(x,y)$ if $p(x,y)>0$. Consider the set of \textit{maximally connected components} $MCC = \{CC_1, ..., CC_l\}$, for some $l \geq 1$, where each $CC_i$ refers to a maximal set of connected edges. Let $CC_i, i \leq l$, be an arbitrary set in $MCC$. Suppose the edges $(x_1, y_1)$ and $(x_1, y_2)$, with $y_1 \neq y_2$ are in $CC_i$. This means that the channels $K^X:=K^{X|T}$ and $K^Y:=K^{Y|T}$ have rows corresponding to the outcomes $T=(x_1, y_1)$ and $T=(x_1, y_2)$ of the form
\[
K^X = \begin{bmatrix} 
      & &  & \vdots & & & \\
    0 & \cdots & 0 & 1 & 0 & \cdots & 0 \\
    0 & \cdots & 0 & 1 & 0 & \cdots & 0 \\
      & &  & \vdots & & & \\
    \end{bmatrix},
\quad
K^Y = \begin{bmatrix} 
      & & & \vdots & & & & \\
    0 & \cdots &  0 & 1 & 0 & 0 & \cdots & 0 \\
    0 & \cdots &  0 & 0 & 1 & 0 & \cdots & 0 \\
      & & & \vdots & & & & \\
    \end{bmatrix}.
\]
Choosing $p(t)=[0, ..., 0, a, 1-a, 0, ..., 0]$, that is, $p(T=(x_1,y_1))=a$ and $p(T=(x_1,y_2))=1-a$, we have that, $\forall a \in [0,1], I(X;T)=0$, which implies that the solution $Q$ must be such that, $\forall a \in [0,1], I(Q;T)=0$ (from the definition of the \emph{more capable} preorder), which in turn implies that the rows of $K^Q$ corresponding to these outcomes must be the same, so that they yield $I(Q;T)=0$ under this set of distributions. We may choose the values of those rows to be the same as those rows from $K^X$ - that is, a row that is composed of zeros except for one of the positions whenever $T=(x_1,y_1)$ or $T=(x_1,y_2)$. On the other hand, if the edges $(x_1, y_1)$ and $(x_2, y_1)$, with $x_1 \neq x_2$, are also in $CC_i$, the same argument leads to the conclusion that the rows of $K^Q$ corresponding to the outcomes $T=(x_1, y_1)$, $T=(x_1, y_2)$, and $T=(x_2, y_1)$ must be the same. Applying this argument to every edge in $CC_i$, we conclude that the rows of $K^Q$ corresponding to outcomes $(x,y) \in CC_i$ must all be the same. Using this argument for every set $CC_1, ..., CC_l$ implies that if two edges are in the same CC, the corresponding rows of $K^Q$ must be the same. These corresponding rows of $K^Q$ may vary between different CCs, but for the same CC, they must be the same.

We are left with the choice of appropriate rows of $K^Q$ for each corresponding $CC_i$. Since $I(Q;T)$ is maximized by a deterministic relation between $Q$ and $T$, and as suggested before, we choose a row that is composed of zeros except for one of the positions, for each $CC_i$, so that $Q$ is a deterministic function of $T$. This admissible point $Q$ implies that $Q=f_1(X)$ and $Q=f_2(Y)$, since $X$ and $Y$ are also functions of $T$, under the channel perspective. For this choice of rows, we have
\begin{equation} \nonumber
I_\cap^{mc}(X,Y) \; = \;
\begin{array}[t]{l}
 \sup_Q I(Q;T)  \\
 \mbox{s.t.} \;\; Q \preceq_{mc} X \\
 \hspace{0.65cm} Q \preceq_{mc} Y
  \end{array} \leq 
  \begin{array}[t]{l}
\sup_Q H(Q)  \\
\mbox{s.t.} \;\; Q = f_1(X) \\
\hspace{0.65cm} Q = f_2(Y)
  \end{array}
\! = \;\; C(X \land Y)
\end{equation}
where we have used the fact that $I(Q;T) \leq \min\{H(Q),H(T)\}$ to conclude that $I_\cap^{mc}(X,Y) \leq C(X \land Y)$. Hence $I_\cap^{ln}(X,Y)=I_\cap^{ds}(X,Y)=I_\cap^{mc}(X,Y)=C(X \land Y)$ if $T$ is a copy of the input.
\end{proof}

\citet{bertschinger2014quantifying} suggested what later became known as the (*) assumption, which states that, in the bivariate source case, any sensible measure of unique information should only depend on $K^{(1)}, K^{(2)}$, and $p(t)$. It is not clear that this assumption should hold for every PID. It is trivial to see that all the introduced II measures satisfy the (*) assumption. 

We conclude with some applications of the proposed measures to famous (bivariate) PID problems, with results shown in Table~\ref{tab1}. Due to channel design in these problems, the computation of the proposed measures is fairly trivial. We assume the input variables are binary (taking values in $\{0,1\}$), independent, and equiprobable. 
\begin{table}[H] 
\caption{Application of the measures to famous PID problems.}\label{tab1}
%\newcolumntype{C}{>{\centering\arraybackslash}X}
\begin{tabularx}{\textwidth}{CCCCCCC}
\toprule
Target & $I_\cap^{\triangleleft}$ & $I_\cap^d$ & $I_\cap^{ln}$ & $I_\cap^{ds}$ & $I_\cap^{mc}$ & $I_\cap^{MMI}$ \\
\midrule
$T~=~Y_1~\text{AND}~Y_2$ & 0 & 0.311 & 0.311 & 0.311 & 0.311 & 0.311\\
$T~=~Y_1~+~Y_2$ & 0 & 0.5 & 0.5 & 0.5 & 0.5 & 0.5\\
$T= Y_1$ & 0 & 0 & 0 & 0 & 0 & 0\\
$T~=~(Y_1,Y_2)$ & 0 & 0 & 0 & 0 & 0 & 1\\
\bottomrule
\end{tabularx}
%\noindent{\endnotesize{\textsuperscript{1} Tables may have a footer.}}
\end{table}
We note that in these fairly simple toy distributions, all the introduced measures yield the same value. This is not surprising when the distribution $p(t,y_1, y_2)$ yields $K^{(1)} = K^{(2)}$, which implies that $I(T;Y_1)=I(T;Y_2)=I_\cap^{k}(Y_1, Y_2)$, where $k$ refers to any of the introduced preorders, as is the case in the $T=Y_1 \text{ AND } Y_2$ and $T=Y_1+Y_2$ examples. Less trivial examples lead to different values over the introduced measures. We present distributions that show that our three introduced measures lead to novel information decompositions by comparing them to the following existing measures: $I_\cap^{\triangleleft}$ from \citet{griffith2014intersection}, $I_\cap^{\text{MMI}}$ from \citet{barrett2015exploration}, $I_\cap^{\text{WB}}$ from \citet{williams2010nonnegative}, $I_\cap^{\text{GH}}$ from \citet{griffith2015quantifying}, $I_\cap^{\text{Ince}}$ from \citet{ince2017measuring}, $I_\cap^{\text{FL}}$ from \citet{finn2017pointwise}, $I_\cap^{\text{BROJA}}$ from \citet{bertschinger2014quantifying}, $I_\cap^{\text{Harder}}$ from \citet{harder2013bivariate} and $I_\cap^{\text{dep}}$ from \cite{james2018unique}. We use the {\tt dit} package \citep{james2018dit} to compute them as well as the code provided in \citep{kolchinsky2022novel}. Consider counterexample 1 by \cite{krner1977comparison} with $p=0.25, \epsilon=0.2, \delta=0.1$, given by
\[
  K^{(1)} = \left[\begin{array}{cc}
    0.25 & 0.75 \\
    0.35 & 0.65
  \end{array}\right], \quad
  K^{(2)} = \left[\begin{array}{cc}
    0.675 & 0.325 \\
    0.745 & 0.255
  \end{array}\right].
\]
These channels satisfy $K^{(2)} \preceq_{ln} K^{(1)}$ and $K^{(2)} \npreceq_{d} K^{(1)}$ \citet{krner1977comparison}. This is an example that satisfies, for whatever distribution $p(t)$, $I_\cap^{ln}(Y_1, Y_2)=I(T;Y_2)$. It is noteworthy to see that even though there is no degradation preorder between the two channels, we still have that $I_\cap^d(Y_1, Y_2)>0$ because there is some non-trivial channel $K^Q$ that satisfies $K^Q \preceq_{d} K^{(1)}$ and $K^Q \preceq_{d} K^{(2)}$. We present various PID under different measures, after choosing $p(t)=[0.4, 0.6]$ (which yields $I(T;Y_2) \approx 0.004$) and assuming $p(t,y_1, y_2) = p(t)p(y_1|t)p(y_2|t)$.
\begin{table}[H] 
\caption{Different decompositions of $p(t,y_1,y_2)$.}\label{tab3}
%\newcolumntype{C}{>{\centering\arraybackslash}X}
\begin{tabularx}{\textwidth}{CCCCCCCCCCCCC}
\toprule
$I_\cap^{\triangleleft}$ & $I_\cap^d$ & $I_\cap^{ln}$ & $I_\cap^{ds}$ & $I_\cap^{mc}$ & $I_\cap^{\text{MMI}}$ & $I_\cap^{\text{WB}}$ & $I_\cap^{\text{GH}}$ & $I_\cap^{\text{Ince}}$ & $I_\cap^{\text{FL}}$ & $I_\cap^{\text{BROJA}}$ & $I_\cap^{\text{Harder}}$ & $I_\cap^{\text{dep}}$\\
\midrule
0 & 0.002 & 0.004 & * & 0.004 & 0.004 & 0.004 & 0.002 & 0.003 & 0.047 & 0.003 & 0.004 & 0\\
\bottomrule
\end{tabularx}
%\noindent{\endnotesize{\textsuperscript{1} Tables may have a footer.}}
\end{table}
We write $I_\cap^{ds} =$ * because we don't yet have a way to find the `largest' $Q$, such that $Q \preceq_{ds} K^{(1)}$ and $Q \preceq_{ds} K^{(2)}$. See counterexample 2 by \cite{krner1977comparison} for an example of channels $K^{(1)}, K^{(2)}$ that satisfy $K^{(2)} \preceq_{mc} K^{(1)}$ but $K^{(2)} \npreceq_{ln} K^{(1)}$, leading to different values of the proposed II measures. An example of $K^{(3)}, K^{(4)}$ that satisfy $K^{(4)} \preceq_{ds} K^{(3)}$ but $K^{(4)} \npreceq_{d} K^{(3)}$ is presented by \citet[page 10]{americo2021channel}, given by
\[
  K^{(3)} = \left[\begin{array}{cc}
    1 & 0 \\
    0 & 1 \\
    0.5 & 0.5
  \end{array}\right], \quad
  K^{(4)} = \left[\begin{array}{cc}
    1 & 0 \\
    1 & 0 \\
    0.5 & 0.5 
  \end{array}\right].
\]
There is no stochastic matrix $K^U$, such that $K^{(4)} = K^{(3)} K^U$, but $K^{(4)} \preceq_{ds} K^{(3)}$ because $K^{(4)} = \diamond_{1,2} K^{(3)}$. Using \eqref{eq:makur} one may check that there is no \emph{less noisy} relation between the two channels \footnote{Compute \eqref{eq:makur} with $V=K^{(4)}, W = K^{(3)}$, $p(t)=[0,0,1]$ and $q(t)=[0.1, 0.1, 0.8]$ to conclude that $K^{(4)} \npreceq_{ln} K^{(3)}$, then switch the roles of $V$ and $W$ and set to $p(t) = [0,1,0]$ and $q(t) = [0.1, 0, 0.9]$ to conclude that $K^{(3)} \npreceq_{ln} K^{(4)}$.}. We present the decomposition of $p(t,y_3, y_4)=p(t)p(y_3|t)p(y_4|t)$ for the choice of $p(t) = [0.3, 0.3, 0.4]$ (which yields $I(T;Y_4) \approx 0.322$) in Table \ref{tab4}.
\begin{table}[H] 
\caption{Different decompositions of $p(t,y_3,y_4)$.}\label{tab4}
%\newcolumntype{C}{>{\centering\arraybackslash}X}
\begin{tabularx}{\textwidth}{CCCCCCCCCCCCC}
\toprule
$I_\cap^{\triangleleft}$ & $I_\cap^d$ & $I_\cap^{ln}$ & $I_\cap^{ds}$ & $I_\cap^{mc}$ & $I_\cap^{\text{MMI}}$ & $I_\cap^{\text{WB}}$ & $I_\cap^{\text{GH}}$ & $I_\cap^{\text{Ince}}$ & $I_\cap^{\text{FL}}$ & $I_\cap^{\text{BROJA}}$ & $I_\cap^{\text{Harder}}$ & $I_\cap^{\text{dep}}$\\
\midrule
0 & 0 & $0^*$ & 0.322 & 0.322 & 0.322 & 0.193 & 0 & 0 & 0.058 & 0 & 0 & 0\\
\bottomrule
\end{tabularx}
%\noindent{\endnotesize{\textsuperscript{1} Tables may have a footer.}}
\end{table}
We write $I_\cap^{ln}=0^*$ because we conjecture, after some numerical experiments based on \eqref{eq:makur}, that the `largest' channel that is less noisy than both $K^{(3)}$ and $K^{(4)}$ is a channel that satisfies $I(Q;T)=0$ \footnote{We tested all $3 \times 3$ row-stochastic matrices whose entries take values in $\{0, 0.1, 0.2, ..., 0.9, 1\}$ with all distributions $p(t)$ and $q(t)$ whose entries take values in the same set.}.
\section{Conclusion and future work}
In this paper, we introduced three new measures of \textit{intersection information} for the \textit{partial information decomposition} (PID) framework, based on preorders between channels implied by the degradation/Blackwell preorder. The new measures were obtained from the preorders by following the approach recently proposed by  \citet{kolchinsky2022novel}. The main contributions and conclusions of the paper can be summarized as follows: 
\begin{itemize}
    \item We showed that a measure of intersection information generated by a preorder that satisfies the axioms by \citet{kolchinsky2022novel} also satisfies the Williams-Beer axioms  \citep{williams2010nonnegative}.
     \item As a corollary of the previous result, the proposed measures satisfy the Williams-Beer axioms and can be extended beyond two sources.
    \item We demonstrated that, if there is a degradation preordering  between the sources, the measures coincide in their decomposition.  Conversely, if there is no degradation preordering (only a weaker preordering) between the source variables, the proposed measures lead to novel finer information decompositions that capture different, finer information.
    \item We showed that the proposed measures do not satisfy the \textit{identity property} (IP) \citep{harder2013bivariate}, but satisfy the \textit{independent identity property} (IIP)  \citep{ince2017measuring}.
    \item We formulated the optimization problems that yield the proposed measures and derived bounds by relating them to existing measures.
\end{itemize}

Finally, we believe this paper opens several avenues for future research, thus we point at several directions to be pursued in upcoming work:

\begin{itemize}
    \item Investigating conditions to verify if two channels, $K^{(1)}$ and $K^{(2)}$, satisfy $K^{(1)} \preceq_{ds} K^{(2)}$. 
    \item \citet{kolchinsky2022novel} showed that when computing $I_\cap^{d}(Y_1, ..., Y_n)$, it is sufficient to consider variables $Q$ with support size, at most, $\sum_i |S_{Y_i}|-n+1$, as a consequence of the admissible region of $I_\cap^{d}(Y_1, ..., Y_n)$ being a polytope. Such is not the case with the \textit{less noisy} or the \textit{more capable} measures, hence it is not clear if it suffices to consider $Q$ with the same support size. This is a direction of future research.
    \item Studying under which conditions the different \textit{intersection information} measures are continuous. 
    \item Implementing the different introduced measures, by addressing the corresponding optimization problems.
    \item Considering the usual PID framework, but instead of decomposing $I(T;Y) = H(Y)-H(Y|T)$, where $H$ denotes Shannon's entropy, one can consider other mutual informations, induced by different entropy measures, such as the \textit{guessing entropy} \citep{massey1994guessing} or the \textit{Tsallis entropy} \citep{tsallis1988possible}. See the work of \citet{americo2021channel} for other core-concave entropies that may be decomposed under the introduced preorders, as these entropies are consistent with the introduced preorders. 
    \item Another line for future work is to define measures of union information with the introduced preorders, as suggested by \citet{kolchinsky2022novel}, and study their properties.
    \item As a more long-term research direction, it would be interesting to study how the approach studied in this paper can be extended to quantum information, where the fact that partial quantum information can be negative may open possibilities or create difficulties \citep{Horodecki2005}.
\end{itemize}

\textbf{Acknowledgments:} We thank Artemy Kolchinsky for helpful discussions and suggestions.

\textbf{Funding:} This research was partially funded by: FCT -- \textit{Fundação para a Ciência e a Tecnologia}, under grants number SFRH/BD/145472/2019 and UIDB/50008/2020; Instituto de Telecomunicações; Portuguese Recovery and Resilience Plan, through project C645008882-00000055 (NextGenAI, CenterforResponsibleAI).

\bibliographystyle{unsrtnat}
\bibliography{references}  %%% Uncomment this line and comment out the ``thebibliography'' section below to use the external .bib file (using bibtex) .

%%% Uncomment this section and comment out the \bibliography{references} line above to use inline references.
% \begin{thebibliography}{1}

% 	\bibitem{kour2014real}
% 	George Kour and Raid Saabne.
% 	\newblock Real-time segmentation of on-line handwritten arabic script.
% 	\newblock In {\em Frontiers in Handwriting Recognition (ICFHR), 2014 14th
% 			International Conference on}, pages 417--422. IEEE, 2014.

% 	\bibitem{kour2014fast}
% 	George Kour and Raid Saabne.
% 	\newblock Fast classification of handwritten on-line arabic characters.
% 	\newblock In {\em Soft Computing and Pattern Recognition (SoCPaR), 2014 6th
% 			International Conference of}, pages 312--318. IEEE, 2014.

% 	\bibitem{hadash2018estimate}
% 	Guy Hadash, Einat Kermany, Boaz Carmeli, Ofer Lavi, George Kour, and Alon
% 	Jacovi.
% 	\newblock Estimate and replace: A novel approach to integrating deep neural
% 	networks with existing applications.
% 	\newblock {\em arXiv preprint arXiv:1804.09028}, 2018.

% \end{thebibliography}

\end{document}